\documentclass[a4paper,UKenglish]{lipics-v2018}


\usepackage{todonotes}
\usepackage[utf8]{inputenc}
\usepackage{multicol}
\usepackage{algorithmic}
\usepackage[ruled,vlined]{algorithm2e}
\usepackage{txfonts}
\usepackage{xspace}
\usepackage{hyperref}
\usepackage{soul}
\usepackage{xspace}

\usepackage{graphicx}

\def\RR{{\mathbb R}}

\usepackage{amsmath}
\usepackage[utf8]{inputenc}
\usepackage{soul}

\newcommand{\OO}{\mathcal{O}\xspace}
\newcommand{\sO}{\widetilde{\mathcal{O}}\xspace}

\newcommand{\volesti}{\href{https://github.com/GeomScale/volume_approximation}{\textcolor{blue}{\texttt{volesti}}}\xspace}
\newcommand{\eigen}{\href{http://eigen.tuxfamily.org}{\textcolor{blue}{\texttt{eigen}}}\xspace}
\newcommand{\boost}{\href{http://boost.com}{\textcolor{blue}{\texttt{boost}}}\xspace}
\newcommand{\mosek}{\href{https://www.mosek.com/}{\textcolor{blue}{\texttt{mosek}}}\xspace}

\newtheorem{thm}{Theorem}
\newtheorem{lem}[thm]{Lemma}

\bibliographystyle{plainurl}

\title{Geometric algorithms for sampling
  the flux space of
  metabolic networks} 

\titlerunning{Geometric analysis of metabolic networks} 

\author{Apostolos Chalkis}{Department of Informatics \& Telecommunications \\ National \& Kapodistrian University of Athens, and\\
Athena Research Innovation Center, Greece}{achalkis@di.uoa.gr}{}{}

\author{Vissarion Fisikopoulos}{Department of Informatics \& Telecommunications \\ National \& Kapodistrian University of Athens, Greece}{vfisikop@di.uoa.gr}{}{}

\author{Elias Tsigaridas}{Inria Paris and  IMJ-PRG,\\ Sorbonne Universit\'e and Paris Universit\'e}{elias.tsigaridas@inria.fr}{}{}

\author{Haris Zafeiropoulos}{Department of Biology, University of Crete \\ Institute of Marine Biology, Biotechnology and Aquaculture, Hellenic Centre for Marine Research}{haris-zaf@hcmr.gr}{}{}

\authorrunning{Chalkis et al.} 

\Copyright{Apostolos Chalkis and Vissarion Fisikopoulos and Elias Tsigaridas and Haris Zafeiropoulos } 


\subjclass{Mathematics of computing$\rightarrow$Mathematical software; Applied computing$\rightarrow$\\Systems biology; Computing methodologies$\rightarrow$Modeling and simulation}


\keywords{Flux analysis, metabolic networks, convex polytopes, random walks, sampling} 

\category{} 

\relatedversion{} 

\supplement{}


\acknowledgements{We would like to thank the anonymous reviewers for their helpful comments and suggestions. We also thank Ioannis Emiris for his useful comments.}


\nolinenumbers
\hideLIPIcs

\begin{document}

\maketitle

\begin{abstract}
  Systems Biology is a fundamental field and paradigm that introduces a new era in Biology.
  The crux of its functionality and usefulness relies on metabolic networks
  that model the reactions occurring inside an organism
  and provide the means to understand the underlying mechanisms that govern biological systems.
  Even more, metabolic networks have a broader impact that ranges from
  resolution of ecosystems to personalized medicine.

  The analysis of metabolic networks is a computational geometry oriented field
  as one of the main operations they depend on is sampling uniformly points from  polytopes;
  the latter provides a representation of the steady states of the metabolic networks.
  However, the polytopes that result from biological data are of very high dimension (to the order of thousands) and in most, if not all, the cases are considerably skinny.
  Therefore, to perform uniform random sampling efficiently in this setting, we need
  a novel algorithmic and computational framework specially tailored
  for the properties of metabolic networks.

  We present a complete software framework to handle sampling in metabolic networks.
  Its backbone is a Multiphase Monte Carlo Sampling (MMCS) algorithm
  that unifies rounding and sampling in one pass, obtaining both upon termination.
  It exploits an
  improved variant of the Billiard Walk that enjoys faster arithmetic complexity per step.
  We demonstrate the efficiency of our approach by performing extensive experiments
  on various metabolic networks.
  Notably, sampling on the most complicated human metabolic network accessible today, Recon3D,
  corresponding to a polytope of dimension  $5\,335$, took less than $30$ hours.
  To our knowledge, that is out of reach for existing software.
\end{abstract}


\section{Introduction}
\label{sec:intro} 

\subsection{The field of Systems Biology}

Systems Biology establishes a scientific approach and a paradigm. As a
research approach, it is the qualitative and quantitative study of the systemic
properties of a biological entity along with their ever evolving interactions
\cite{ klipp2016systems, kohl2010systems}.
By combining experimental studies  with mathematical
modeling it analyzes the function and the behavior of biological systems.
In this setting, we model the interactions between the  components of a system
to shed light  on the system's \textit{raison d'être} and to decipher its underlying mechanisms
in terms of evolution, development, and physiology \cite{ideker2001new}.

Initially, Systems Biology emerged as a need. New technologies in Biology
accumulate vast amounts of information/data from different levels of the
biological organization, i.e., genome, transcriptome, proteome, metabolome
\cite{quinn2016sample}. This leads to the emerging question \textit{"what shall
  we do with all these pieces of information"?} The answer, if we consider
Systems Biology as a paradigm, is to move away from reductionism, still the main
conceptual approach in biological research, and adopt holistic approaches for
interpreting how a system's properties emerge~\cite{noble2008music}. The
following diagram provides a first, rough, mathematical formalization of this
approach.

\begin{center}
\textit{components} $\,\to\,$ \textit{networks} $\,\to\,$ \textit{in silico models} $\,\to\,$\textit{phenotype} \cite{palsson2015systems}. \\ \end{center}

Systems Biology expands in all the different levels of living entities, from the
molecular, to the organismal and ecological level. The notion that
penetrates all  levels horizontally is \emph{metabolism}; the
process that modifies molecules and  maintains the living state of a
cell or an organism through a set of chemical reactions
\cite{schramski2015metabolic}. The reactions begin with a particular molecule
which they convert into some other molecule(s), while they are catalyzed by
enzymes in a key-lock relationship.
We call the quantitative relationships between the components of a reaction   \emph{stoichiometry}.
Linked reactions, where the product of the first acts as the substrate for the
next, build up metabolic pathways. Each pathway is responsible for a certain
function. We can link together the aggregation of all the pathways that take
place in an organism (and their corresponding reactions)
and represent them mathematically using  the reactions' stoichiometry.
Therefore, at the species level, metabolism is a network of its metabolic pathways and we call
these representations \emph{metabolic networks}.

\subsection{From metabolism to computational geometry}

The complete reconstruction of the metabolic network of an organism is a
challenging, time consuming, and computationally intensive task; especially for species of high level of complexity such as \emph{Homo sapiens}.
Even though sequencing the complete genome of a species is becoming a trivial task
providing us with quality insight, manual curation is still mandatory and large groups 
of researchers need to spend a great amount of time to build such models \cite{thiele2010protocol}.
However, over the last few years, automatic reconstruction approaches for building genome-scale metabolic 
models \cite{machado2018fast} of relatively high quality have been developed.
Either way, we can now obtain the metabolic network of a bacterial species (single cell species)
of a tissue and even the complete metabolic network of a mammal.
Biologists are also moving towards obtaining such networks for all the species present in a microbial community. This will allow us to further investigate the dynamics, the functional profile, and the inter-species reactions that occur.
Using the stoichiometry of each reaction, which is always the same in the various species,
we convert the metabolic network of an organism to a mathematical model.
Thus, the metabolic network becomes an \emph{in silico} model of the knowledge it represents.
In metabolic networks analysis mass and energy are considered to be conserved
\cite{palsson2009metabolic}. As many homeostatic states, that is steady internal
conditions \cite{shishvan2018homeostatic}, are close to steady states (where the
production rate of each metabolite equals its consumption rate
\cite{cakmak2012new}) we commonly use the latter in metabolic networks analysis.

Stoichiometric coefficients are the number of molecules a biochemical reaction
consumes and produces. The coefficients of all the reactions in a network,
with $m$ metabolites and $n$ reactions ($m < n$),  form
the \emph{stoichiometric matrix} $S\in \RR^{m\times n}$~\cite{palsson2015systems}.
%
The nullspace of $S$ corresponds to the steady states of the network:
\begin{equation}
  \label{eq:Sv}
S \cdot x = 0 ,
\end{equation}
where $x \in \RR^n $ is the \textit{flux vector} that contains  the fluxes
of each chemical reaction of the network.
Flux is the rate of turnover of molecules through a metabolic pathway.
\begin {figure}[t!]
    \centering
    \includegraphics[scale=0.166]{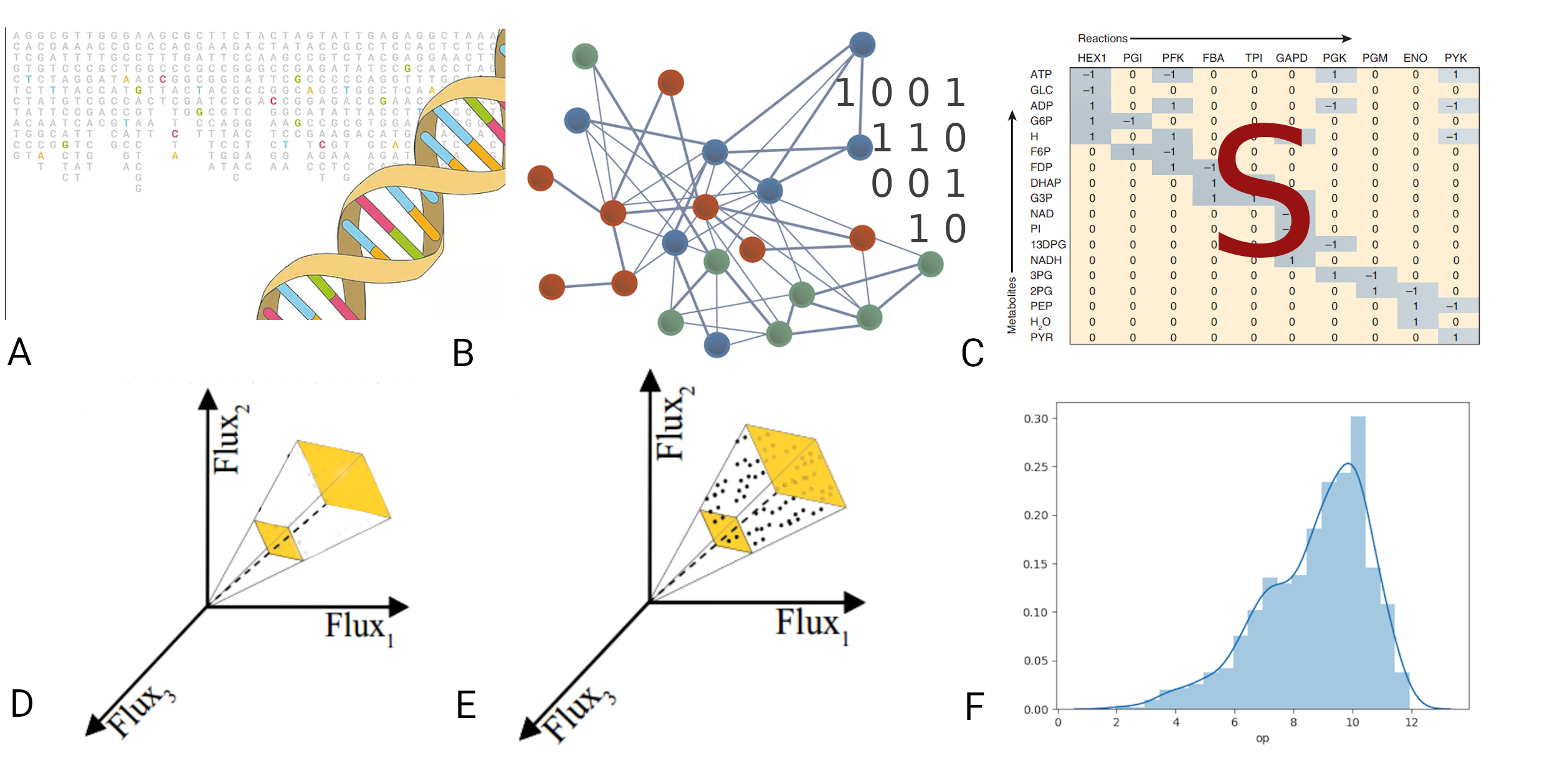}
    \caption{
    From DNA sequences to distributions of metabolic fluxes.
    (A) The genes of an organism provide us with the enzymes that it can potentially produce. Enzymes are like a blueprint for the reactions they can catalyze.
    (B) Using the enzymes we identify the reactions in the organism.
    (C) We construct the stoichiometric matrix of the metabolic  model.
    (D) We consider the flux space under different conditions (e.g., steady states);
    they correspond  to  polytopes containing flux vectors addressing these conditions.
    (E) We sample from polytopes that are typically skinny and of high dimension.
    (F) The distribution of the flux of a reaction provides  great insights
    to biologists.
    }
    \label{fig:workflow}
\end{figure}

All physical variables are finite, therefore  the flux (and the concentration)
is bounded  \cite{palsson2015systems}; that is
for each coordinate $x_i$ of the $x$, there are $2n$ constants
$x_{ub, i}$ and  $x_{lb, i}$
such that  $x_{lb,i} \le x_i \le x_{ub, i}$, for $i  \in [n]$.
We derive the constraints from explicit experimental information.
In cases where there is no such information, reactions are left unconstrained by
setting arbitrary large values to their corresponding bounds according to their reversibility properties; i.e., if a reaction is reversible then its flux might be negative as well~\cite{lularevic2019improving}.
The constraints define a $n$-dimensional box
containing both the steady and the dynamic states of the system.
If we intersect that box with the nullspace of
$S$, then we define a polytope that encodes all the possible steady
states and their  flux distributions \cite{palsson2015systems}.
We call it the  steady-state \emph{flux space}.
Fig.~\ref{fig:workflow} illustrates the complete workflow from building a metabolic network to the computation of a flux distribution.

Using the polytopal representation, a commonly used method for the analysis of a metabolic network is Flux Balance Analysis (FBA)~\cite{orth2010flux}.
FBA identifies a single optimal flux distribution by optimizing a linear objective function over a polytope~\cite{orth2010flux}. Unfortunately, this is a \textit{biased} method because it depends on the  selection of the objective function.
To study the  global features of a metabolic network we need \emph{unbiased methods}. To obtain  an accurate picture of the whole solution space we exploit sampling techniques \cite{schellenberger2009use}.
If collect a sufficient number of points uniformly distributed in the interior of the polytope, then the biologists can study the properties of certain components of the whole network and deduce significant biological insights~\cite{palsson2015systems}. Therefore, efficient sampling tools are of great importance.

\begin{figure}[t!]
    \centering
    \includegraphics[scale=0.1645]{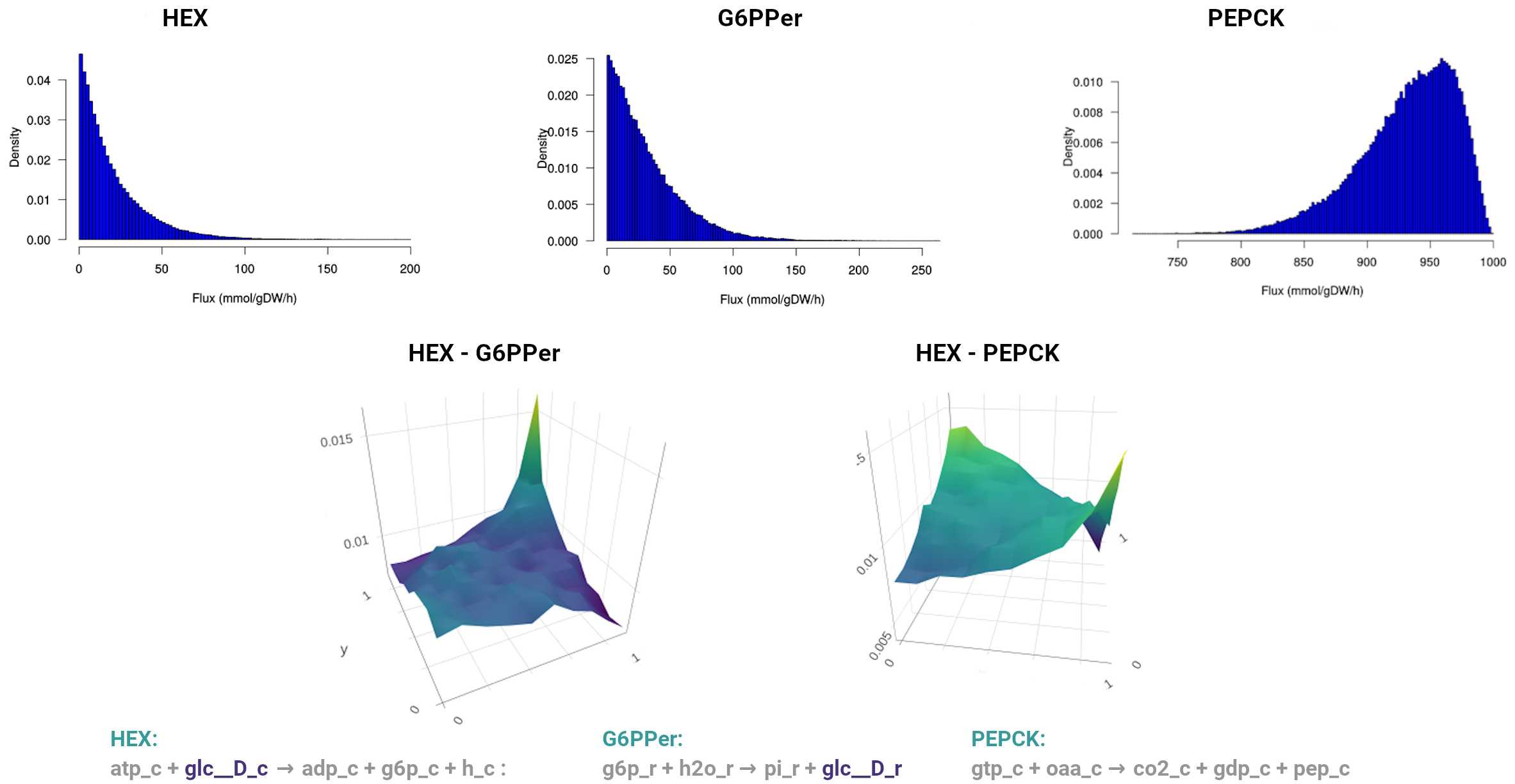}
    \caption{Flux distributions in the most recent human metabolic network Recon3D~\cite{brunk2018recon3d}. We estimate the flux distributions of the reactions catalyzed by the enzymes
     Hexokinase (D-Glucose:ATP) (HEX), Glucose-6-Phosphate Phosphatase,
    Edoplasmic Reticular (G6PPer)
    and Phosphoenolpyruvate carboxykinase (GTP) (PEPCK).
    As we sample steady states, the production rate of $glc\_\_D \_c$ should be equal to its consumption rate. 
    Thus, in the corresponding copula, we see a positive dependency between HEX,
    i.e., the reaction that consumes $glc\_\_D \_c$ and G6PPer, that produces it.
    Furthermore, the PEPCK reaction operates when there is no $glc\_\_D\_c$ available
    and does not operate when the latter is present.
    Thus, in their copula we observe a negative dependency between HEX and PEPCK.
    A copula is a bivariate probability distribution for which the marginal probability distribution of each variable is uniform.
    It implies a positive dependency when the mass of the distribution concentrates along the up-diagonal (HEX - G6PPer)
    and a negative dependency when the mass is concentrated along the down-diagonal (HEX - PEPCK). 
    The bottom line contains the  reactions and their stoichiometry.
    }
    \label{fig:copulas}
\end{figure}

\subsection{Metabolic networks through the lens of random sampling} \label{sec:previous_work}

Efficient uniform random sampling on polytopes resulting from metabolic
networks is a very challenging task  both from the theoretical (algorithmic)
and the engineering (implementation)  point of view.
First, the dimension of the polytopes is of the order of certain thousands. This
requires, for example, advanced engineering techniques to cope with memory
requirements and to perform linear algebra operations with large matrices;
e.g., in
Recon3D~\cite{brunk2018recon3d} we compute the null space of a
$8\,399 \times 13\,543$ matrix. Second, the polytopes are rather skinny
(Sec.~\ref{sec:experiments}); this makes it harder for sampling algorithms to
move in the interior of polytopes and calls for novel practical techniques to
sample.

There is extended on-going research concerning advanced algorithms and
implementations for sampling metabolic networks over the last decades. Markov
Chain Monte Carlo algorithms such as Hit-and-Run (HR)~\cite{smith84} have been
widely used to address the challenges of sampling. Two variants of HR are the
non-Markovian Artificial Centering Hit-and-Run
(ACHR)~\cite{kaufman1998direction} that has been widely used in sampling
metabolic models, e.g.,~\cite{Saa16}, and Coordinate Hit-and-Run with Rounding
(CHRR)~\cite{haraldsdottir2017chrr}. 
The latter is part of the {\tt cobra} toolbox~\cite{heirendt2019creation}, the most commonly used software package for the analysis of metabolic networks. 
CHRR enables sampling from complex metabolic
network corresponding to the highest dimensional polytopes so far. There are
also stochastic formulations where the inclusion of experimental noise in the
model makes it more compatible with the stochastic nature of biological
networks~\cite{MacGillivray17}. The recent study in~\cite{fallahi2020comparison}
offers an overview as well as an experimental comparison of the currently
available samplers.

These implementations played a crucial role in actually performing in practice
uniform sampling from the flux space. However, they are currently limited to
handle polytopes of dimension say $\leq 2\, 500$ \cite{fallahi2020comparison,
  haraldsdottir2017chrr}. This is also the order of magnitude of the most
complicated, so far, metabolic network model built,
Recon3D~\cite{brunk2018recon3d}. By including $13\,543$ metabolic reactions and
involving $4 \,140$ unique metabolites, Recon3D provides a representation of the
$17\%$ of the functionally of annotated human genes. To our knowledge, there is
no method that can efficiently handle sampling from the flux space of Recon3D.

Apparently, the dimension of the polytopes will keep rising and not only for the ones corresponding to  human metabolic networks.
Metabolism governs systems biology at all its levels, including the one of the community.
Thus, we are not only interested in sampling a sole metabolic network, even if it has the challenges of the human.
Sampling in polytopes associated to network of networks are the next big thing in metabolic networks analysis and in Systems Biology \cite{bernstein2019metabolic,perez2016metabolic}.

Regarding the sampling process, from the theoretical point of view, we are
interested in the convergence time, or {\it mixing time}, of the Markov Chain,
or geometric {\it random walk}, to the target distribution. Given a
$d$-dimensional polytope $P$, the mixing time of several geometric random walks
(e.g., HR or Ball Walk) grows quadratically with respect to the sandwiching
ratio $R/r$ of the polytope~\cite{LovSim,Lovasz06}. Here $r$ and $R$ are the
radii of the smallest and largest ball with center the origin that contains, and
is contained, in $P$, respectively; i.e., $rB_d \subseteq P \subseteq RB_d$,
where $B_d$ is the unit ball. It is crucial to reduce $R/r$, i.e., to put $P$ in
well a rounded position where $R/r = \sO(\sqrt{d})$; the $\sO(\cdot)$ notation
means that we are ignoring polylogarithmic factors. A powerful approach to
obtain well roundness is to put $P$ in \emph{near isotropic position}. In
general, $K \subset \RR^d$ is in isotropic position if the uniform distribution
over $K$ is in isotropic position, that is $\mathbb{E}_{X\sim K}[X] = 0$ and
$\mathbb{E}_{X\sim K}[X^TX] = I_d$, where $I_d$ is the $d\times d$ identity
matrix. Thus, to put a polytope $P$ into isotropic position one has to generate
a set of uniform points in its interior and apply to $P$ the transformation that
maps the point-set to isotropic position; then iterate this procedure until $P$
is in $c$-isotropic position \cite{ Cousins15,Lovasz06}, for a constant~$c$.
In~\cite{Adamczak10} they prove that $\OO(d)$ points suffice to achieve
$2$-isotropic position. Alternatively in \cite{haraldsdottir2017chrr} they
compute the maximum volume ellipsoid in $P$, they map it to the unit ball, and
then apply to $P$ the same transformation. They experimentally show that a few
iterations suffice to put $P$ in John's position~\cite{John48}. Moreover, there
are a few algorithmic contributions that combine sampling with distribution
isotropization steps, e.g., the multi-point walk~\cite{Bertsimas04} and the
annealing schedule~\cite{Kalai06}.

An important parameter of a random walk is the {\it walk length}, i.e., the number of the intermediate points that a random walk visits before producing a single sample point.
The longer the walk length of a random walk is, the smaller the distance of the current distribution to the stationary (target) distribution becomes.
For the majority of  random walks there are bounds on the walk length to bound the mixing time with respect to a statistical distance. For example, HR generates a sample from a distribution with total variation distance less than $\epsilon$ from the target distribution after $\sO(d^3)$~\cite{Lovasz06} steps, in a well rounded convex body and for log-concave distributions. Similarly, CDHR mixes after a polynomial, in the diameter and the dimension, number of steps~\cite{Laddha20,Narayanan20} for the case of uniform distribution. However, extended practical results have shown that both CDHR and HR converges after $\OO(d^2)$ steps~\cite{Chalkis19, Cousins15, haraldsdottir2017chrr}. The leading algorithms for uniform polytope sampling are the Riemannian Hamiltonian Monte Carlo sampler~\cite{VempalaRiem18} and the Vaidya walk~\cite{Chen18}, with mixing
times $\sO(md^{2/3})$ and $\sO(m^{1/2}d^{3/2})$ steps, respectively.
However, it is not clear if these random walks can outperform CDHR in practice,
because of their  high cost per step and numerical instability.

Billiard Walk (BW)~\cite{Gryazina14} is a random walk that employs linear
trajectories in a convex body with boundary reflections; alas with an unknown
mixing time. The closest guarantees for its mixing time are those of HR and
stochastic billiards~\cite{Dieker15}. Interestingly,~\cite{Gryazina14} shows
that, experimentally, BW converges faster than HR for a proper tuning of its
parameters. The same conclusion follows from the computation of the volume of
zonotopes~\cite{Chalkis20}. It is not known how the sandwiching ratio of $P$
affects the mixing time of BW. Since BW employs reflections on the boundary,
we can consider it as a special case of Reflective Hamiltonian Monte Carlo~\cite{ChePioCaz18}.

For almost all random walks the theoretical bounds on their mixing times are
pessimistic and unrealistic for  computations. Hence,
if we terminate the random walk earlier, we generate samples that are  usually highly correlated.
There are several {\it MCMC Convergence Diagnostics} \cite{Roy20} to check
if  the quality of a  sample can  provide an accurate
approximation of the target distribution. For a dependent sample, a powerful
diagnostic is the {\it Effective Sample Size} (ESS). It is the number of
effectively independent draws from the target distribution that the Markov chain
is equivalent to. For autocorrelated samples, ESS bounds
the uncertainty in estimates \cite{geyer92} and provides  information about
the quality of the sample. There are several statistical tests to evaluate the quality of a generated sample, e.g., potential scale reduction factor (PSRF)~\cite{Gelman92}, maximum mean discrepancy (MMD)~\cite{Gretton12},
and the uniform tests~\cite{CousinsThesis17}.
Interestingly, 
the copula representation we employ in Fig.~\ref{fig:copulas} to capture the dependence between two fluxes of reactions was also used  successfully in a geometric framework to detect financial crises capturing the dependence between portfolio return and volatility~\cite{Cales18}.

\subsection{Our contribution}
\label{sec:contributions}

We introduce a Multi-phase Monte Carlo Sampling (MMCS) algorithm
(Sec.~\ref{sec:MMCS} and Alg.~\ref{alg:MMCS}) to sample from a polytope~$P$. In
particular, we split the sampling procedure in phases where, starting from $P$,
each phase uses the sample to round the polytope. This improves the efficiency
of the random walk in the next phase, see Fig.~\ref{fig:mmcs_phases} and
Table~\ref{tab:sampling_phases}.
For sampling, we propose an improved variant of Billiard Walk (BW)
(Sec.~\ref{subsec:billiard} and Alg.~\ref{alg:billiard}) that enjoys faster
arithmetic complexity per step. We also handle efficiently the potential arithmetic inaccuracies near to the boundary, see~\cite{ChePioCaz18}.
We accompany the MMCS algorithm with a powerful MCMC diagnostic, namely the
estimation of Effective Sample Size (ESS), to identify a satisfactory
convergence to the uniform distribution.
However, our method is flexible and  we can
use any  random walk and
combination of MCMC diagnostics to decide~
convergence.

The open-source implementation of our algorithms\footnote{\url{https://github.com/GeomScale/volume_approximation/tree/socg21}} provides a
complete software framework to handle efficiently sampling in metabolic
networks. We demonstrate the efficiency of our tools by performing experiments
on almost all the metabolic networks that are publicly available and by
comparing with the
state-of-the-art software packages as {\tt cobra}
(Sec.~\ref{subsec:experiments}). Our implementation is faster than {\tt cobra}
for low dimensional models,  with a speed-up that ranges from $10$ to $100$ times;
this gap on running times increases for bigger models
(Table~\ref{tab:results1}). The quality of the sample our software produces is
measured with two widely used diagnostics, i.e., ESS and potential scale reduction factor (PSRF)~\cite{Gelman92}. The highlight of
our method is the ability to sample from the most complicated human metabolic
network that is accessible today, namely Recon3D. In Fig.~\ref{fig:copulas} we estimate marginal univariate and bivariate flux distributions in Recon3D which validate
(a) the  quality of the sample by confirming a mutually exclusive pair of biochemical pathways,
and that (b) our method indeed generates steady states. In particular, our software
can sample $1.44\cdot 10^5$ points from a $5\,335$-dimensional polytope in a day
using modest hardware. This set of points suffices for the majority of systems
biology analytics.
To our understanding this task is out of reach for existing software.

Last, MMCS algorithm is quite general sampling scheme and
so it has the potential to address other
hard computational problems like  multivariate integration and volume estimation of polytopes.

\section{Efficient Billiard walk}\label{subsec:billiard}

The geometric random walk of our choice
to sample from a polytope
  is based on  Billiard Walk (BW)~\cite{Gryazina14},
which we modify to reduce the per-step cost.

For a polytope $P = \{ x \in \RR^d \,|\, A x \leq b  \}$,
where $A \in \RR^{k \times d}$ and $b \in \RR^k$,
BW starts from a given point $p_0 \in P$,
selects uniformly at random a
direction, say $v_0$, and it moves along the direction of $v_0$ for length $L$;
it reflects on the boundary if necessary. This results a new point $p_1$ inside
$P$. We repeat the procedure from $p_1$. Asymptotically it converges to the
uniform distribution over $P$. The length is $L=-\tau \ln \eta$, where $\eta$ is
a uniform number in $(0,1)$, that is $\eta\sim\mathcal{U}(0,1)$, and $\tau$ is a
predefined constant. It is useful to set a bound, say $\rho$, on the number of
reflections to avoid computationally hard cases where the trajectory may stuck
in corners. In \cite{Gryazina14} they set
$\tau \approx \mathop{diam}(P)$ and $\rho =10d$.
Our choices for $\tau$ and $\rho$ depend on a
burn-in step that we detail in  Sec.~\ref{sec:experiments}.

At each step of BW we
compute the intersection point of a ray, say $\ell:=\{p+t\varv,\ t\in\RR_+ \}$,
with the boundary of $P$, $\partial P$, and the normal vector of the tangent
plane at the intersection point.
The inner vector of the facet that the intersection  point belongs to is a row of $A$.
To compute the point $\partial P\cap\ell$ where the first reflection of a BW
step takes place, we solve the following $m$ linear equations
\begin{equation}\label{eq:boundary_oracle}
  a_j^T(p_0 + t_j\varv_0) = b_j \Rightarrow t_j = (b_j - a_j^Tp_0) / a_j^T\varv_0,\ j \in[k],
\end{equation}
and keep the smallest positive $t_j$; $a_j$ is the $j$-th row of the matrix $A$.
We solve each equation in $\OO(d)$ operations
and so the overall complexity is $\OO(dk)$.
A straightforward approach for BW would consider that each reflection costs $\OO(kd)$ and thus the per step cost is $\OO(\rho kd)$.
However, our improved version performs more efficiently both {\it point} and {\it direction updates} by storing computations from the previous iteration combined with a preprocessing step. The preprocessing step involves the normal vectors of the facets, that takes $m^2 d$ operations,
and the amortized per-step complexity of BW becomes $\OO((\rho + d)k)$.
The pseudo-code appears in Alg.~\ref{alg:billiard} in the appendix.

\begin{lem}
  \label{lem:BW-step-cost}
  The amortized per step complexity of BW (Alg.~\ref{alg:billiard}) is $\OO((\rho + d)k)$ after a preprocessing step that takes $\OO(k^2d)$ operations, where $\rho$ is the maximum number of reflections per step.
\end{lem}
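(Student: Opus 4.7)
The plan is to show that, after an $\OO(k^2 d)$ preprocessing, the bookkeeping required for each of the (at most) $\rho$ reflections in one BW step can be carried out in $\OO(k)$ arithmetic operations, while the remaining $\OO(kd)$ work per step is spent only once, when a fresh direction is drawn. Two auxiliary vectors will drive the whole update: $u := A p \in \RR^k$ (the signed slacks of the current point with respect to the facet hyperplanes) and $w := A \varv \in \RR^k$ (the inner products of the current direction with the facet normals). Given $u$ and $w$, the boundary oracle \eqref{eq:boundary_oracle} reduces to $t_j = (b_j - u_j)/w_j$, so finding the first positive hit is $\OO(k)$.

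For the preprocessing I would compute and store the Gram matrix $G := A A^T \in \RR^{k \times k}$ in $\OO(k^2 d)$ time, together with the squared row norms $\|a_j\|^2 = G_{jj}$. The key observation is that the $j$-th column $g_j$ of $G$ is exactly $A a_j$, which is the vector needed whenever we reflect at facet $j$. Indeed, a reflection at facet $j$ sends $\varv \mapsto \varv' = \varv - 2(w_j/G_{jj})\, a_j$, so applying $A$ gives
\[ w' \;=\; w - 2\,\frac{w_j}{G_{jj}}\, g_j, \]
a single rank-one update that costs $\OO(k)$; note that we never have to materialise $\varv'$ itself during the intermediate reflections. Similarly, advancing a distance $t$ along the current ray updates $u' = u + t\, w$ in $\OO(k)$. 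Hence each of the $\rho$ reflections costs $\OO(k)$ in total for both point and direction bookkeeping, plus a comparison sweep across the $k$ ratios $t_j$.

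When the reflection budget $\rho$ is exhausted (or the prescribed trajectory length $L$ is consumed), a new direction $\varv$ is sampled uniformly on the sphere of $\RR^d$, and only at that moment do we recompute $w = A\varv$ from scratch, at cost $\OO(kd)$; the vector $u$ need not be recomputed because it has been maintained incrementally across the entire previous step. Summing over one step of BW therefore gives $\OO(kd) + \rho \cdot \OO(k) = \OO((\rho + d)k)$, which is the claim; the preprocessing is amortised over all BW steps.

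The main subtlety I foresee is the synchronisation argument: one must verify that $u$ and $w$ really do remain consistent with the current $(p, \varv)$ through every reflection and at every step boundary, so that the lazy representation of $\varv$ and $p$ never has to be forcibly materialised mid-step. A secondary check is that the numerical safeguards near $\partial P$ of the type discussed in~\cite{ChePioCaz18} (e.g.\ reprojecting a point that has drifted slightly outside $P$ due to round-off) introduce only $\OO(k)$ additional work per reflection. Both are routine verifications against the pseudocode in Alg.~\ref{alg:billiard}, but they are the only points at which an oversight could inflate the asymptotic bound beyond $\OO((\rho+d)k)$.
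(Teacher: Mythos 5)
Your proposal is correct and follows essentially the same route as the paper's proof: precompute the Gram matrix $AA^T$ in $\OO(k^2d)$, maintain the inner products of the current point and direction with all facet normals (your $u = Ap$, $w = A\varv$, which are exactly the quantities the paper stores via Equations~(\ref{eq:billiard_implementation1})--(\ref{eq:billiard_implementation2})), update them by an $\OO(k)$ rank-one step per reflection using a column of the Gram matrix, and pay $\OO(kd)$ once per step for the fresh direction. Your vectorized presentation is a cleaner restatement of the same bookkeeping, so the two arguments coincide.
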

\begin{proof}
  The first reflection of a BW  step costs $\OO(kd)$.
  During its computation, we store all the values of the inner
  products $a_j^Tx_0$ and $a_j^T\varv_0$.
  At the reflection $i>0$, we start from a point $x_i$,
  and the solutions of the corresponding linear equations are
  \begin{equation}
    \label{eq:billiard_implementation1}
    \begin{split}
      a_j^T(p_i + t_j\varv_i) = b_j & \Rightarrow
      a_j^T(p_{i-1} +  t_{i-1}\varv_{i-1})
      + t_ja_j^T(\varv_{i-1}-2(\varv_{i-1}^Ta_r) a_r) = b_j \\
      & \Rightarrow  t_j = \frac{b_j - a_j^T(p_{i-1} +  t_{i-1}\varv_{i-1})}{a_j^T(\varv_{i-1}-2(\varv_{i-1}^Ta_r) a_r)},\
      \text{ for }  j \in [k]  ,
    \end{split}
  \end{equation}
  \begin{equation}\label{eq:billiard_implementation2}
    \text{ and } \, \varv_{i+1} = \varv_i -2(\varv_i^Ta_l) a_l,
  \end{equation}
  where $a_r,\ a_l$ are the normal vectors of the facets that $\ell$ hits at reflection $i-1$ and $i$ respectively, and $t_{i-1}$ the solution of the  reflection $i-1$.
  The index $l$ of the normal $a_l$
  corresponds to the equation with the smallest positive $t_j$
   in~(\ref{eq:billiard_implementation1}).
   We solve each of the  equations in (\ref{eq:billiard_implementation2}) in $\OO(1)$ based on our bookkeeping from the  previous reflection.
   We also store the inner product $\varv_i^Ta_l$ in (\ref{eq:billiard_implementation2}) from the  previous reflection.
   After computing all $a_i^Ta_j$ as a preprocessing step, which takes $k^2d$ operations, the total per-step cost of Billiard Walk is $\OO((d+\rho)k)$.
\end{proof}

The use of floating point arithmetic could result to points outside $P$ due to rounding errors when computing boundary points. To avoid this, when we compute the roots in Equation~(\ref{eq:boundary_oracle}) we exclude the facet that the ray hit in the previous reflection.

\section{Multiphase Monte Carlo Sampling algorithm}
\label{sec:MMCS}

To sample steady states in the flux space of a metabolic network, with $m$
metabolites and $n$ reactions, we introduce a Multiphase Monte Carlo Sampling
(MMCS) algorithm; it is multiphase because it consists of a sequence of sampling
phases.

Let $S\in\mathbb{R}^{m\times n}$ be the  stoichiometric matrix
and  $x_{lb},\ x_{ub}\in\mathbb{R}^{n}$
bounds on the fluxes.
The flux space is the bounded convex polytope
\begin{equation}
  \label{eq:steady_states}
  \mathrm{FS} :=  \{x \in \RR^n \,|\, Sx = 0,\ x_{lb}\leq x \leq x_{ub} \} \subset \RR^n .
\end{equation}
The dimension, $d$, of $\mathrm{FS}$ is smaller than the dimension of the
ambient space; that is $d \leq n$.
To work with a full dimensional polytope we restrict the box
induced by the inequalities $x_{lb}\leq x \leq x_{ub}$ to the null space of  $S$.
Let the  H-representation of the box be
$\left\{ x \in \RR^n \,\Big |\, \begin{pmatrix} I_n \\ -I_n\end{pmatrix} x
  \leq \begin{pmatrix} x_{ub} \\ x_{lb} \end{pmatrix} \right\}$,
where $I_n$ is the $n\times n$ identity matrix,
and let $N\in\RR^{n\times d}$ be the matrix of the null space of $S$,
that is $S \, N = 0_{m \times d}$.
Then
$P = \{ x\in\RR^d\ |\ A x \leq b \}$,
where $A = \begin{pmatrix} I_n N \\ -I_n N\end{pmatrix}$
and $b = \begin{pmatrix} x_{ub} \\ x_{lb} \end{pmatrix}  N$,
is a
full dimensional polytope (in $\RR^d$).
After we sample (uniformly) points from $P$,
we transform them to uniformly distributed points (that is steady states) in $\mathrm{FS}$
by applying  the linear map induced by $N$.

\vspace{8pt}
MMCS generates, in a sequence of sampling phases, a set of points,
 that is almost equivalent to $n$ independent uniformly distributed points in $P$, where $n$ is given.
At each phase, it employs Billiard Walk (Section~\ref{subsec:billiard}) to sample approximate uniformly distributed points, rounding
to speedup sampling,
and uses the Effective Sample Size (ESS) diagnostic to decide termination.
The pseudo-code of the algorithm appears in Alg.~\ref{alg:MMCS}.

\emph{Overview.} Initially we set $P_0 = P$.

At each phase $i \geq 0$ we sample at most $\lambda$ points from $P_i$. We
generate them in chunks; we also call them  \emph{chain} of sampling points.
Each chain contains at  most $l$ points (for simplicity consider $l = \OO(1)$). To
generate the points in each chain we employ BW, starting from a point inside
$P_i$; the starting point is different for each chain.
We repeat this procedure until the total number of samples in $P_i$ reaches the maximum number $\lambda$; we need $\tfrac{\lambda}{l}$ chains.
To compute a starting point for a chain, we pick a point uniformly at random in the
Chebychev ball of $P_i$ and we perform $\OO(\sqrt{d})$
burn-in BW steps to obtain a warm start.

After we have generated  $\lambda$ sample points we perform
a rounding step on $P_i$ to obtain the polytope of the next phase, $P_{i+1}$.
In particular, we compute a linear transformation, $T_i$, that puts the sample
into isotropic position and then $P_{i+1} = T_i(P_i)$.
The efficiency of BW improves from one phase to the next one
because the sandwiching ratio decreases and so
the  average number of reflections decreases
and thus the convergence  to the uniform distribution accelerates
(Section~\ref{subsec:experiments}).
That is we obtain faster a sample of better quality.
Finally, the (product of the) inverse transformations maps the samples to $P_0 = P$. Fig.~\ref{fig:mmcs_phases}
depicts the procedure.

\emph{Termination.} There are no bounds on the mixing time of BW \cite{Gryazina14},
hence for termination we rely on ESS.
MMCS terminates when the minimum ESS among all the univariate
marginals is larger than a requested value. We chose the marginal
distributions (of each flux) because they are essential for systems biologists,
see \cite{Bordel10} for a typical example.
In particular, after we generate a chain, the algorithm updates the ESS of each
univariate marginal to take into account all the points that we have sampled in $P_i$,
including the newly generated chain. We keep the minimum, say $n_i$, among all
marginal ESS values. If $\sum_{j=0}^in_j$ becomes larger than $n$ before the
total number of samples in $P_i$ reaches the upper bound $\lambda$, then  MMCS
terminates. Otherwise, we proceed to the next phase. In summary, MMCS terminates
when the sum of the minimum marginal ESS values of each phase reaches $n$.

\begin{figure}[t!]
    \centering
    \includegraphics[scale=0.153]{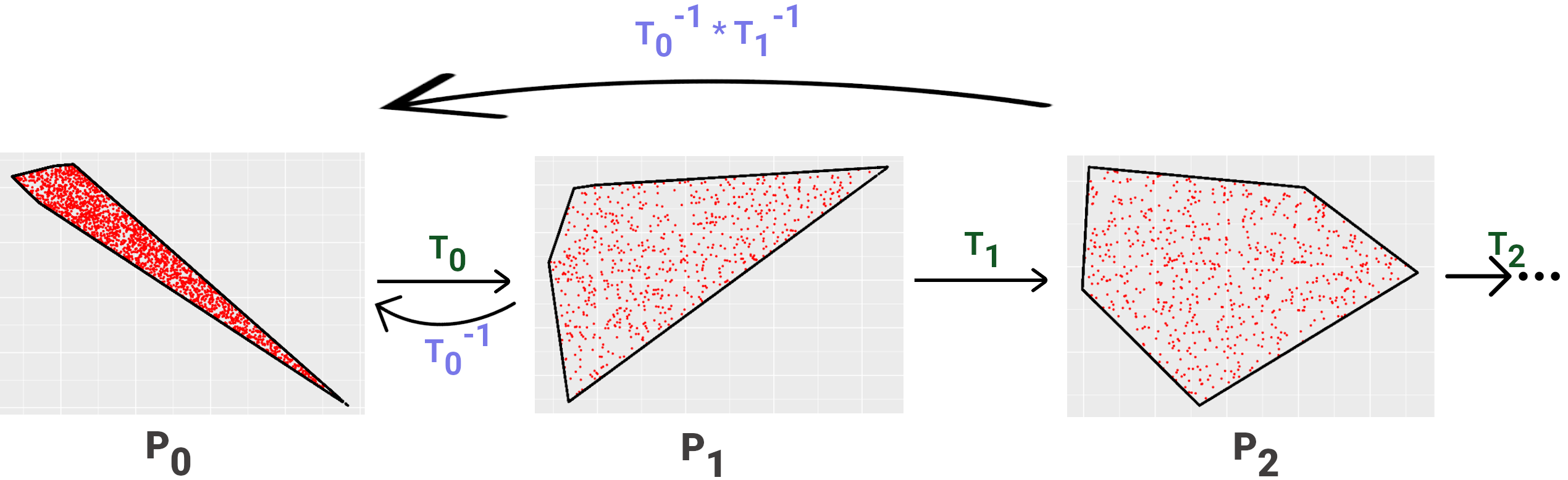}
    \caption{An illustration of our Multiphase Monte Carlo Sampling algorithm. The method is given an integer $n$ and starts at phase $i=0$ sampling from $P_0$. In each phase it samples a maximum number of points $\lambda$. If the sum of Effective Sample Size in each phase becomes larger than $n$ before the total number of samples in $P_i$ reaches $\lambda$ then the algorithm terminates. Otherwise, we proceed to a  new phase.
    We map back to $P_0$ all the generated samples of each phase.}
    \label{fig:mmcs_phases}
\end{figure}

\begin{algorithm}[t]
  \caption{Multiphase Monte Carlo Sampling$(P, n, l, \lambda, \rho, \tau, W)$}
	\label{alg:MMCS}
	\SetKwInOut{Input}{Input}
	\SetKwInOut{Output}{Output}
    \SetKwInOut{Require}{Require}
    \SetKwRepeat{Do}{do}{while}

    \Input{A full dimensional polytope $P\in\RR^d$; Requested effectiveness $n\in\mathbb{N}$; $l$ length of each chain; upper bound for the number of  generated points in each phase $\lambda$; upper bound on the number of reflections $\rho$; length of trajectory parameter $\tau$; walk length $W$.}


	\Output{A set of approximate uniformly distributed points $S\in P$}

    \BlankLine
    Set $P_0 \leftarrow P,\ sum\_ess\leftarrow 0,\ S\leftarrow\emptyset,\ i\leftarrow 0,\ T_0 = I_d$\;
    \Do{$sum\_ess < n$} {
    	$sum\_point\_phase\leftarrow 0,\ U\leftarrow\emptyset$\;
    	\Do{$sum\_point\_phase < \lambda$} {
    		Generate a starting point $p\in P_i$\;
    		Generate a set $Q$ of $l$ points with Billiard Walk starting from p\;
    		$S\leftarrow S\cup T_i^{-1}(Q)$\;
    		$U\leftarrow U\cup Q$\;
    		$sum\_point\_phase\leftarrow sum\_point\_phase + l$\;
    		Update ESS $n_i$ of this phase\;
    		\lIf{$sum\_ess + n_i \geq n$} {
    			\textbf{break}
    		}
    	}
    	$sum\_ess\leftarrow sum\_ess + n_i$\;
    	$i\leftarrow i+1$\;
    	Compute $T$ such that $T(U)$ is in isotropic position\;
    	$T_i \leftarrow T_{i-1}\circ T$\;

    }

    \KwRet $S$\;
\end{algorithm}

\emph{Rounding step}. This step is
motivated by the theoretical result in~\cite{Adamczak10} and the rounding
algorithms~\cite{Lovasz06, Cousins15}.
We apply the linear transformation $T_i$ to $P_i$
so that the sandwiching ratio of $P_{i+1}$ is smaller than that of $P_i$.
To find the suitable $T_i$ we compute  the SVD decomposition of the matrix that contains the sample row-wise \cite{Shiri20}.

\emph{Updating the  Effective Sample Size}.
The effective sample size of a sample of points generated by a process with autocorrelations $\rho_t$ at lag $t$
is function (actually an infinite series) in the $\rho_t$'s; its exact value is unknown.
Following \cite{geyer92}, we efficiently compute ESS employing a finite sum of monotone estimators
$\hat{\rho}_t$ of the autocorrelation at lag $t$, by exploiting Fast Fourier Transform. 
Furthermore, given $M$ chains of samples, the autocorrelation estimator $\hat{\rho}_t$ is given by,
\begin{equation}\label{eq:autocorrelation}
\hat{\rho}_t = 1 - \frac{C - \frac{1}{M}\sum_{i=1}^{M}\hat{\rho}_{t,i}}{B},
\end{equation}
where $C$ and $B$ are the within-sample variance estimate and the multi-chain variance estimate given in \cite{Gelman92} and $\hat{\rho}_{t,i}$ is an estimator of the autocorrelation of the $i$-th chain at lag $t$.
To update the ESS, for every new chain of points the algorithm generates, we compute
$\hat{\rho}_{t,i}$. Then, using Welford's algorithm we
update the average of the estimators of autocorrelation at lag $t$,
as well as the between-chain variance and the within-sample variance estimators given in~\cite{Gelman92}.
Finally, we update the ESS using these estimators.

\begin{lem}\label{lem:mmcs_complexity}
Let $P=\{x\in\RR^d\ |\ Ax\leq b\}$, $A\in\RR^{k\times d},\ b\in\RR^k$ a full dimensional polytope in $\RR^d$.
The total number of operations per phase that Alg.~\ref{alg:MMCS} performs, is
$\OO(W(\rho + d)k\lambda + \lambda^2d + d^3)$,
where $W$ is the walk length for Billiard Walk.
\end{lem}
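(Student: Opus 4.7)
The plan is to decompose a single iteration of the outer do-while loop of Alg.~\ref{alg:MMCS} into four computational blocks and bound each one separately: (i) the BW preprocessing on the current polytope $P_i$; (ii) the generation of at most $\lambda$ sample points, partitioned into chains of length $l$ and produced by Billiard Walk; (iii) the rounding step that computes the isotropic transformation $T$ from the accumulated sample $U$; and (iv) the bookkeeping operations, namely composing $T_i = T_{i-1}\circ T$, applying $T_i^{-1}$ to the sampled chains before storing them in $S$, and updating the univariate marginal ESS estimates. Summing these four contributions should yield the claimed bound.

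For blocks (i)--(ii), Lemma~\ref{lem:BW-step-cost} gives a BW preprocessing cost of $\OO(k^2 d)$ and an amortized per-step cost of $\OO((\rho+d)k)$. Each of the at most $\lambda$ produced samples requires $W$ BW steps, so generation alone costs $\OO(W(\rho+d)k\lambda)$. Starting points for the $\OO(\lambda/l)$ chains are obtained by sampling the Chebyshev ball and running an $\OO(\sqrt{d})$ burn-in of BW, which contributes $\OO(\sqrt{d}(\rho+d)k\lambda/l)$ and is dominated by the above provided $W \gtrsim \sqrt{d}/l$. The preprocessing $\OO(k^2 d)$ is absorbed either into the sampling term (whenever $k \le W\lambda$) or into the $d^3$ term in the regime $k = \OO(d)$, which holds in the flux-space construction of Sec.~\ref{sec:MMCS}.

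For block (iii), I would compute $T$ via the SVD of the centered sample matrix $U \in \RR^{\lambda \times d}$: form the scatter matrix $U^T U$ in $\OO(\lambda d^2)$ operations and diagonalize it in $\OO(d^3)$, then assemble $T$ and its inverse in $\OO(d^3)$. The cost $\lambda d^2$ satisfies $\lambda d^2 \le \lambda^2 d + d^3$ (by case analysis on $\lambda \lessgtr d$), so block (iii) is $\OO(\lambda^2 d + d^3)$. For block (iv), composing $T_i = T_{i-1}\circ T$ is a single matrix product, $\OO(d^3)$; applying $T_i^{-1}$ to each chain of $l$ points contributes $\OO(d^2 l)$ per chain and $\OO(d^2 \lambda)$ in total, which is again $\OO(\lambda^2 d + d^3)$; and the FFT-based autocorrelations, updated incrementally after each chain via the Welford scheme together with the within/between-chain variance formulas of~(\ref{eq:autocorrelation}), contribute $\OO(d\lambda\log\lambda)$ over the phase, which is subdominant.

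Adding the bounds of the four blocks gives $\OO(W(\rho+d)k\lambda + \lambda^2 d + d^3)$. The main obstacle in this plan is block (iii): one has to verify that the chosen SVD-based isotropization, together with the inverse-map application in block (iv), actually fits in $\OO(\lambda^2 d + d^3)$ regardless of whether $\lambda \ge d$ or $\lambda < d$, and then check that the ESS updates and the BW preprocessing are indeed absorbed by the stated three terms under the standing assumptions on $k$, $W$, and $l$. Everything else is a direct application of Lemma~\ref{lem:BW-step-cost} and elementary linear-algebra accounting.
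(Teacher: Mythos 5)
Your proposal is correct and follows essentially the same route as the paper's proof: bound the $W\lambda$ Billiard Walk steps by the amortized per-step cost $\OO((\rho+d)k)$ from Lemma~\ref{lem:BW-step-cost}, add the cost of generating the $\OO(\lambda/l)=\OO(\lambda)$ chain starting points from the Chebyshev ball, note the ESS updates are subdominant, and charge the rounding step as an SVD of a $\lambda\times d$ sample matrix at $\OO(\lambda^2 d + d^3)$. The only differences are matters of extra care on your side: the paper's proof silently omits the $\OO(k^2d)$ per-phase preprocessing, the application of $T_i^{-1}$ to the chains, and uses $\OO(1)$ burn-in steps per starting point (rather than the $\OO(\sqrt{d})$ stated in the algorithm overview, which is what forces your proviso $W\gtrsim\sqrt{d}/l$), so your explicit absorption conditions tighten rather than alter the argument.
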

\begin{proof}
The cost per step of Billiard Walk is $\OO((\rho + d)k)$.
In each phase we generate with Billiard Walk at most $\lambda$ points with walk length $W$. Thus, the cost to generate those points is $\OO(W(\rho + d)k\lambda)$.

To compute the starting point of each chain the algorithm picks a random point
uniformly distributed in the Chebychev ball of $P$ and performs $\OO(1)$
Billiard Walk steps starting from it.
The former  takes $\OO(d)$ operations and latter  takes $\OO(W(\rho + d)k)$ operations.
The total number of chains is $\OO(\lambda / l) = \OO(\lambda)$, as $l=\OO(1)$.
Thus, the total cost to generate all the starting points is $\OO(d\lambda + W(\rho + d)k\lambda)$. The update of ESS of each univariate marginal takes $\OO(1)$ operations since $l = \OO(1)$.

If the termination criterion has not been met after generating $\lambda$ points,
the algorithm computes a linear transformation to put the set of points to isotropic
position. We can do this by computing the SVD decomposition of the matrix that
contains the set of points row-wise. This corresponds to an SVD of a
$\lambda \times d$ matrix and takes $\OO(\lambda^2d + d^3)$ operations
\cite{golub13}.
\end{proof}

In Section~\ref{sec:experiments} we discuss how to tune the parameters of MMCS
to make it more efficient in practice. We also comment on the (practical)
complexity of each phase, based on the tuning.


\section{Implementation and Experiments}\label{sec:experiments}

In the sequel we present the implementation of our approach and the tuning of
various parameters. We present experiments in an extended set of BURG
models \cite{king2016bigg}, including the most complex metabolic networks i.e.,
the human Recon2D~\cite{swainston16} and Recon3D~\cite{brunk2018recon3d}. We end up to sample from polytopes of thousands
of dimensions and show that our method can estimate precisely the flux
distributions. We analyze various aspects of our method as the runtime, the
efficiency and the quality of the output. We compare our method against the
state-of-the-art software for the analysis of metabolic networks, which is the
{\tt Matlab toolbox} of {\tt cobra}~\cite{heirendt2019creation}. Our implementation for low
dimensional networks is two orders of magnitude faster than {\tt cobra}. As the
dimension grows this gap on the run-time increases. The fast mixing of billiard
walk allow us to use all the generated samples to approximate each flux
distribution improving the flux distribution estimation.

We provide a complete open-source software framework to handle big metabolic networks. The framework loads a metabolic model in some standard format (e.g., {\tt mat,\ json} files) and performs an analysis of the model e.g., compute the marginal distributions of a given metabolite. All the results in this paper are reproducible using our publicly available code\footnote{\url{https://github.com/GeomScale/volume_approximation/tree/socg21}}.
The core of our implementation is in  {\tt C++} to optimize performance while the user interface is implemented in  {\tt R}. The package employs \eigen~\cite{eigenweb} for linear algebra, \boost~\cite{boostrandom} for random number generation, \mosek~\cite{mosek} as the linear program solver, and expands \volesti~\cite{chalkis2020volesti} an open-source package for high dimensional sampling and volume approximation.
All experiments were performed on a PC with {\tt Intel® Core™ i7-6700 3.40GHz × 8 CPU} and {\tt 32GB RAM}.

\subsection{Parameter tuning for practical performance}\label{subsec:implementation}

We give details on how we tune various parameters presented in Section~\ref{sec:MMCS} in our implementation.

\textbf{Parameters of Billiard Walk}: To employ Billiard Walk
(Section~\ref{subsec:billiard}) we have to make efficient selections for the
parameter $\tau$ that controls the length of the trajectory in each step, for
the maximum number of reflections per step $\rho$, and for the walk length $W$
of the random walk. We experimentally found that setting $W=1$ the empirical
distribution converges faster to the uniform distribution. Thus, we get a higher
ESS faster than the case of $W>1$. To set $\tau$ in phase $i$, first we set
$\tau = 6\sqrt{d}r$ where $r$ is the radius of the Chebychev ball of $P_i$.
Then, we start from the center of the Chebychev ball, we perform
$100 + 4\sqrt{d}$ Billiard Walk steps and we store all the points in a set
$Q$. Then we set
$\tau = \max\{ \max\limits_{q\in Q}\{ ||q-p||_2 \}, 6\sqrt{d}r\}$. For the
maximum number of reflections we found experimentally that
$\rho = 100d$ is violated in less than $0.1\%$ of the total number of Billiard
Walk steps in our experiments.

\textbf{Rounding step}: In each phase $i$ of our method, when the minimum value of ESS among all the marginals has not reached the requested threshold, we use the generated sample to perform a rounding step by mapping the points to isotropic position. After computing the SVD decomposition of the point-set we also rescale the singular values such that the smallest one is $1$, to improve numerical stability as suggested in \cite{Cousins15}.  We found experimentally that setting the maximum number of Billiard Walk points per phase $\lambda = 20d$, where $d$ is the dimension of the polytope, suffice to improve the roundness from phase to phase. When, in any phase, the ratio between the maximum over the minimum singular value is smallest than $3$ we stop performing any new rounding step. In that case we stay on the current phase until we reach the requested value of ESS.

\begin{remark}
Given the Stoichiometric matrix $S\in\RR^{m\times n}$ of a metabolic network with flux bounds $x_{lb}\leq x\leq x_{ub}$, the total number of operations per phase that our implementation of Alg.~\ref{alg:MMCS} performs, according the parameterization given in this Section is $\OO(nd^2)$, where $d$ is the dimension of the null space of $S$ and $n$ is the number of reactions occur in the metabolic network.
\end{remark}

\subsection{Experiments}
\label{subsec:experiments}

\begin{figure}[t!]
    \centering
    \includegraphics[scale=0.293]{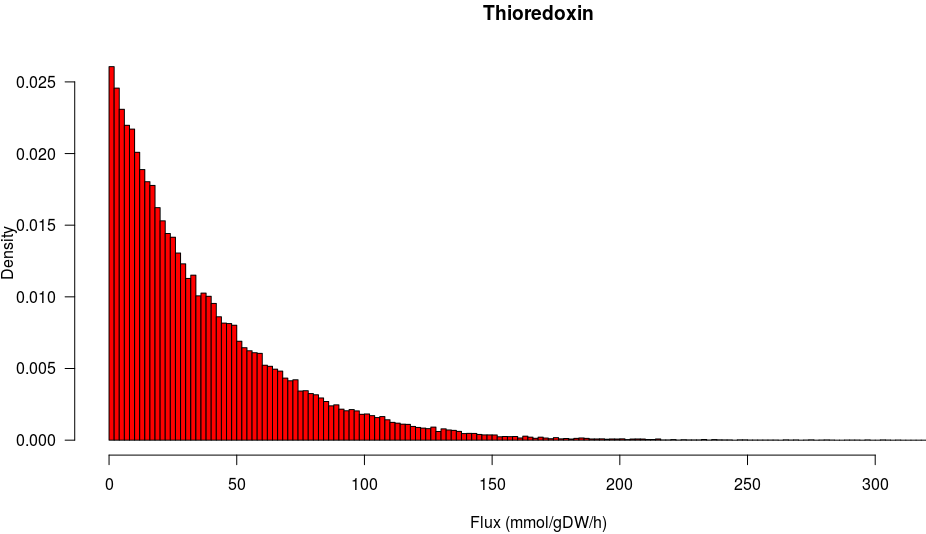}
    \includegraphics[scale=0.293]{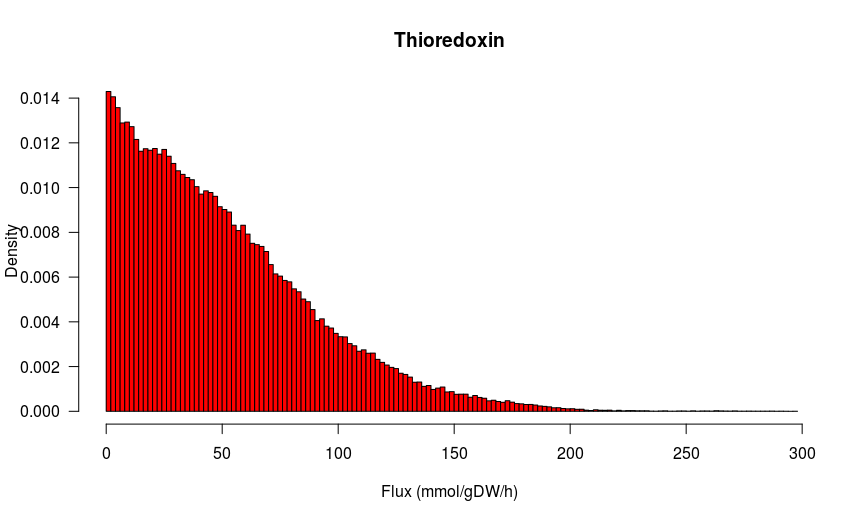}
    \caption{Our method estimation of the marginal distribution of the ``Thioredoxin reductace" flux in a constraint-based model of Homo Sapiens metabolism Recon2D~\cite{swainston16} (left) and Recon3D~\cite{brunk2018recon3d} (right).}
    \label{fig:thioredoxin}
\end{figure}

We test and evaluate our software on 17 models from the BIGG database
\cite{king2016bigg} and Recon2D, Recon3D from~\cite{Noronha18}.
In particular,
we sample from models that correspond to polytopes of dimension less than $100$;
the  simplest model in this setting is the well known bacteria
\textit{Escherichia Coli}.
We also computed with models that correspond to polytopes of  dimension a  few
thousands; this is the case for Recon2D and Recon3D.
We do not employ parallelism for
any implementation, thus we report only sequential running times.
To assess  the quality of our results we  employ a second MCMC convergence
diagnostic besides the Effective Sample Size (ESS).
This is the
potential scale reduction factor (PSRF),
introduced by Rubin and Gelman
\cite{Gelman92}.
In particular, we compute
the PSRF for each univariate marginal of the sample that MMCS outputs.
Following \cite{Gelman92},
a convergence is satisfying
according to PSRF when all the marginals have PSRF smaller than $1.1$. 

The workflow of
{\tt cobra} for sampling first performs a rounding step and then samples using Coordinate Directions Hit-and-Run (CDHR). To compare with {\tt cobra} we set the walk length of CDHR according to the empirical suggestion made in \cite{haraldsdottir2017chrr}, i.e., equal to $8d^2$, where $d$ is the dimension of the polytope we sample. For Recon2D we follow the paradigm in \cite{haraldsdottir2017chrr} which shows that the method converges for walk length equal to $1.57e$+$08$. To have a fair comparison we let {\tt cobra} to sample a minimum number of $1\, 000$ points. If in the computed sample there is a marginal with PSRF larger than $1.1$ we continue sampling until all PSRFs are smaller than $1.1$.

\begin{table}[h]
\centering
\small
\begin{tabular}{|c||c|c|c||c|c||c|c|}\hline
    &  & & & \multicolumn{2}{c}{MMCS} &   \multicolumn{2}{c}{{\tt cobra}} \\
name & (m) & (n) & (d) & Time (sec)  & (N)  & Time (sec)  & (N) \\ \hline
e\_coli\_core & 72 & 95 &  24 & 6.50e-01 & 3.40e+03 (8)   & 7.20e+01  & 4.61e+06  \\     
iLJ478 & 570 & 652 & 59  & 9.00e+00 & 5.40e+03 (5)  & 4.54e+02  & 2.79e+07 \\                   
iSB619 & 655 & 743 & 83 & 1.70e+01 &  8.20e+03 (5)  & 9.56e+02  & 5.51e+07  \\
iHN637 & 698 & 785 & 88  & 2.00e+01 & 6.80e+03 (4)  & 1.03e+03  & 6.19e+07  \\
iJN678 & 795 & 863 & 91 & 2.50e+01 & 8.10e+03 (4)  & 1.17e+03  & 6.62e+07 \\                  
iNF517  & 650 & 754 & 92  & 1.70e+01 & 6.20e+03 (4)  & 1.33e+03  & 6.77e+07  \\
iJN746  & 907 & 1054 & 116 & 5.70e+01 & 8.70e+03 (3)  & 2.22e+03  & 1.07e+08 \\              
iAB\_RBC\_283  & 342 & 469 & 130 & 5.20e+01 &  1.07e+04 (5)  & 7.85e+03  & 4.05e+08  \\
iJR904  & 761 & 1075 &  227  & 2.98e+02 & 1.62e+04 (4) & 8.81e+03  & 4.12e+08   \\
iAT\_PLT\_636   & 738 & 1008 & 289  & 3.25e+02  & 1.04e+04 (2)  & 1.73e+04  & 6.68e+08 \\
iSDY\_1059 & 1888 & 2539 & 509 & 2.813e+03 & 2.31e+04 (3) & 6.66e+04  & 2.07e+09 \\
iAF1260 & 1668 & 2382 & 516 & 6.84e+03 & 5.33e+04 (6)  & 7.04e+04  & 2.13e+09 \\
iEC1344\_C  & 1934 & 2726 & 578  & 4.86e+03 & 3.95e+04 (4) & 9.42e+04   & 2.67e+09 \\
iJO1366  & 1805 & 2583 & 582 & 6.02e+03 & 5.14e+04 (5)  & 9.99e+04  & 2.71e+09 \\
iBWG\_1329 & 1949 & 2741 & 609 & 3.06e+03 & 4.22e+04 (4) & 1.05e+05   & 2.97e+09 \\
iML1515 & 1877 & 2712 & 633 &  4.65e+03 & 5.65e+04 (5)  & 1.15e+05   & 3.21e+09 \\
Recon1 & 2766 & 3741 & 931 & 8.09e+03 & 1.94e+04 (2) & 3.20e+05  & 6.93e+09 \\
Recon2D & 5063 & 7440 & 2430 & 2.48e+04  &  5.44e+04 (2)  & $\sim 140$ days & $1.57e$+$11$   \\
Recon3D  & 8399 & 13543 & 5335 & 1.03e+05 &  1.44e+05 (2) & -- & -- \\
\hline
\end{tabular}
\caption{\label{tab:results1} 17 metabolic networks from \cite{king2016bigg} and Recon2D, Recon3D from~\cite{Noronha18}; (m) the number of Metabolites, (n) the number of Reactions, 
(d) the dimension of the polytope; (N) is the total number of sampled points $\times$ walk length;
for MMCS we stop when the sum of the minimum value of ESS among all the univariate marginals in each phase is $1000$ (we report the number of phases in  parenthesis);
for {\tt cobra} we set the walk length to $8d^2$ and $1.57e$+$08$ for Recon2D following~\cite{haraldsdottir2017chrr}, sample at least $1000$ points and stop when all marginals have PSRF $< 1.1$;
the runtime of {\tt cobra} for Recon2D is an estimation of the sequential time just for the purpose of the comparison in this paper.}
\end{table}

In Table~\ref{tab:results1} we report the results of MMCS and {\tt cobra}. We
run MMCS until we get a value of ESS equal to $1\,000$; meaning that we stop when
the sum over all phases of the minimum values of ESS among all the marginals is
larger than $1\,000$. Moreover, in Table~\ref{tab:results1} all the marginals of the sample that MMCS returns have PSRF $ < 1.1$. This is another statistical evidence on the quality of the generated sample. The histograms in Fig.~\ref{fig:thioredoxin}
illustrate an approximation for the flux distribution of the reaction
Thioredoxin as computed in Recon2D and Recon3D respectively. The same marginal flux distribution in Recon2D was estimated also in~\cite{haraldsdottir2017chrr}. Notice that the
estimated density slightly changes in Recon3D as the stoichiometric matrix has
been updated and thus the corresponding marginal is affected. In Fig.~\ref{fig:copulas} we also employ the copula representation to capture the dependence between two fluxes of reactions to confirm a mutually exclusive pair of biochemical pathways.
Notice that the run-time of MMCS is one or two orders of magnitude smaller than the run-time of {\tt cobra} and this gap becomes much larger for higher dimensional models such as Recon2D and Recon3D.

For some models --we report them in Table~\ref{tab:skip_phases}-- we introduce a
further improvement to obtain a better convergence. If there is a marginal in
the generated sample from MMCS that has a PSRF larger than $1.1$ then we do not
take into account the $k$ first phases, starting with $k=1$ until we get
both ESS equal to $1\, 000$ and all the PSRF values smaller than $1.1$ for all the
marginals. By "do not take into account" we mean that  we neither store the
generated sample --for the first $k$ phases-- nor we sum up its ESS to the
overall ESS considered for termination by MMCS. Note that for these models it is
not practical to repeat MMCS runs for different $k$ until we get the required
PSRF value. We can obtain the final results --reported in
Tables~\ref{tab:results1}-- in one pass. We simply drop a phase when the ESS
reaches the requested value but the PSRF is not smaller than $1.1$ for all the
marginals. In Table~\ref{tab:skip_phases} we separately report the MMCS runs for
different $k$ just for performance analysis reasons.

\begin{table}[h]
\centering
\begin{tabular}{|c||c|c|c|c|}\hline
\multicolumn{5}{c}{Sampling from iAF1260} \\ \hline
Phase & Avg. $\#$reflections & ESS & $\frac{\sigma_{\max}}{\sigma_{\min}}$ & Time (sec)  \\ \hline
1st & 7819 & 67 & 43459 & 2271 \\
2nd & 4909 & 68 & 922 & 1631 \\
3rd & 3863 & 77 & 582 & 1278 \\
4th & 3198 & 71 & 360 & 1080 \\
5th & 1300 & 592 & 29 & 454 \\
6th & 1187 & 4821 & 3.5 & 417 \\
7th & 1181 & 4567 & 2.8 & 415 \\
\hline
\end{tabular}
\caption{We sample $20d = 10320$ points per phase with Billiard Walk and walk length equal to $1$, where $d = 516$ is the dimension of the corresponding polytope. For each phase we report the average number of reflections per BIlliard Walk step, the the minimum value of Effective Sample Size among all the univariate marginals, the ratio between the maximum over the minimum singular value derived from the SVD decomposition of the generated sample and the run-time.\label{tab:sampling_phases}}
\end{table}

Interestingly, the total number of Billiard Walk steps --and consequently the
run-time-- does not increase as $k$ increases in Table~\ref{tab:skip_phases}.
This means that the performance of our method improves for these models, when we
do not take into account the $k$ first phases of MMCS. This happens because the
performance of Billiard Walk improves as the polytope becomes more rounded from
phase to phase. In particular, in Table~\ref{tab:sampling_phases} we analyze the
performance of
Billiard Walk for the model iAF1260. We sample $20d$ points per
phase with walk length equal to $1$ and we report the average number of
reflections, the ESS, the run-time, and the ratio $\sigma_{\max} / \sigma_{\min}$
per phase. The latter is the ratio between the maximum over the minimum singular
value of the point-set. The larger this ratio is the more skinny the polytope of
the corresponding phase is. As the method progresses from the first to the last
phase, the average number of reflections and the run-time decrease and the ESS
increases. This means that as the polytope becomes more rounded from phase to
phase, the Billiard Walk step becomes faster and the generated sample has better
quality. This explains why the total run-time does not increase when we do not
take into account the first  $k$ phases: the initial phases are slow and
they  contribute poorly to the  quality of the final sample; the last phases are
fast and contribute with more accurate samples.



\bibliography{metabolic_networks}

\begin{thebibliography}{10}

\bibitem{Adamczak10}
Radosław Adamczak, Alexander Litvak, Alain Pajor, and Nicole
  Tomczak-Jaegermann.
\newblock Quantitative estimates of the convergence of the empirical covariance
  matrix in log-concave ensembles.
\newblock {\em Journal of the American Mathematical Society}, 23(2):535--561,
  2010.
\newblock \href {https://doi.org/10.1090/S0894-0347-09-00650-X}
  {\path{doi:10.1090/S0894-0347-09-00650-X}}.

\bibitem{mosek}
MOSEK ApS.
\newblock {\em The MOSEK optimization toolbox for R manual. Version 9.2.},
  2019.
\newblock URL: \url{https://docs.mosek.com/9.2/rmosek/index.html}.

\bibitem{Shiri20}
Shiri Artstein-Avidan, Haim Kaplan, and Micha Sharir.
\newblock On radial isotropic position: Theory and algorithms, 2020.
\newblock \href {http://arxiv.org/abs/2005.04918} {\path{arXiv:2005.04918}}.

\bibitem{bernstein2019metabolic}
David~B Bernstein, Floyd~E Dewhirst, and Daniel Segre.
\newblock Metabolic network percolation quantifies biosynthetic capabilities
  across the human oral microbiome.
\newblock {\em Elife}, 8:e39733, 2019.

\bibitem{Bertsimas04}
Dimitris Bertsimas and Santosh Vempala.
\newblock Solving convex programs by random walks.
\newblock {\em J. ACM}, 51(4):540–556, July 2004.
\newblock \href {https://doi.org/10.1145/1008731.1008733}
  {\path{doi:10.1145/1008731.1008733}}.

\bibitem{Bordel10}
Sergio Bordel, Rasmus Agren, and Jens Nielsen.
\newblock Sampling the solution space in genome-scale metabolic networks
  reveals transcriptional regulation in key enzymes.
\newblock {\em {PLOS} Computational Biology}, 6(7):1--13, 07 2010.
\newblock \href {https://doi.org/10.1371/journal.pcbi.1000859}
  {\path{doi:10.1371/journal.pcbi.1000859}}.

\bibitem{brunk2018recon3d}
Elizabeth Brunk, Swagatika Sahoo, Daniel~C Zielinski, Ali Altunkaya, Andreas
  Dr{\"a}ger, Nathan Mih, Francesco Gatto, Avlant Nilsson, German
  Andres~Preciat Gonzalez, Maike~Kathrin Aurich, et~al.
\newblock Recon{3D} enables a three-dimensional view of gene variation in human
  metabolism.
\newblock {\em Nature biotechnology}, 36(3):272, 2018.

\bibitem{cakmak2012new}
Ali Cakmak, Xinjian Qi, A~Ercument Cicek, Ilya Bederman, Leigh Henderson,
  Mitchell Drumm, and Gultekin Ozsoyoglu.
\newblock A new metabolomics analysis technique: steady-state metabolic network
  dynamics analysis.
\newblock {\em Journal of bioinformatics and computational biology},
  10(01):1240003, 2012.

\bibitem{Cales18}
Ludovic Cal\`es, Apostolos Chalkis, Ioannis~Z. Emiris, and Vissarion
  Fisikopoulos.
\newblock {Practical Volume Computation of Structured Convex Bodies, and an
  Application to Modeling Portfolio Dependencies and Financial Crises}.
\newblock In Bettina Speckmann and Csaba~D. T{\'o}th, editors, {\em 34th
  International Symposium on Computational Geometry (SoCG 2018)}, volume~99 of
  {\em LIPIcs}, pages 19:1--19:15, Dagstuhl, Germany, 2018. Schloss
  Dagstuhl--Leibniz-Zentrum fuer Informatik.
\newblock \href {https://doi.org/10.4230/LIPIcs.SoCG.2018.19}
  {\path{doi:10.4230/LIPIcs.SoCG.2018.19}}.

\bibitem{Chalkis19}
Apostolos Chalkis, Ioannis~Z. Emiris, and Vissarion Fisikopoulos.
\newblock Practical volume estimation by a new annealing schedule for cooling
  convex bodies, 2019.
\newblock \href {http://arxiv.org/abs/1905.05494} {\path{arXiv:1905.05494}}.

\bibitem{Chalkis20}
Apostolos Chalkis, Ioannis~Z. Emiris, and Vissarion Fisikopoulos.
\newblock Practical volume estimation of zonotopes by a new annealing schedule
  for cooling convex bodies.
\newblock In Anna~Maria Bigatti, Jacques Carette, James~H. Davenport, Michael
  Joswig, and Timo de~Wolff, editors, {\em Mathematical Software -- ICMS 2020},
  pages 212--221, Cham, 2020. Springer International Publishing.

\bibitem{chalkis2020volesti}
Apostolos Chalkis and Vissarion Fisikopoulos.
\newblock volesti: Volume approximation and sampling for convex polytopes in
  {R}, 2020.
\newblock \url{https://github.com/GeomScale/volume_approximation}.
\newblock \href {http://arxiv.org/abs/2007.01578} {\path{arXiv:2007.01578}}.

\bibitem{Chen18}
Yuansi Chen, Raaz Dwivedi, Martin~J. Wainwright, and Bin Yu.
\newblock Fast mcmc sampling algorithms on polytopes.
\newblock {\em Journal of Machine Learning Research}, 19(55):1--86, 2018.
\newblock URL: \url{http://jmlr.org/papers/v19/18-158.html}.

\bibitem{ChePioCaz18}
A.~Chevallier, S.~Pion, and F.~Cazals.
\newblock {Hamiltonian Monte Carlo with boundary reflections, and application
  to polytope volume calculations}.
\newblock Research Report RR-9222, {INRIA Sophia Antipolis, France}, 2018.
\newblock URL: \url{https://hal.archives-ouvertes.fr/hal-01919855}.

\bibitem{CousinsThesis17}
B.~Cousins.
\newblock {\em Efficient high-dimensional sampling and integration}.
\newblock PhD thesis, Georgia Institute of Technology, Georgia, U.S.A., 2017.

\bibitem{Cousins15}
Ben Cousins and Santosh Vempala.
\newblock A practical volume algorithm.
\newblock {\em Mathematical Programming Computation}, 8(2):133--160, 2016.

\bibitem{Dieker15}
A.~B. Dieker and Santosh~S. Vempala.
\newblock Stochastic billiards for sampling from the boundary of a convex set.
\newblock {\em Mathematics of Operations Research}, 40(4):888--901, 2015.
\newblock URL: \url{http://www.jstor.org/stable/24540983}.

\bibitem{fallahi2020comparison}
Shirin Fallahi, Hans~J Skaug, and Guttorm Alendal.
\newblock A comparison of {Monte Carlo} sampling methods for metabolic network
  models.
\newblock {\em {PLOS One}}, 15(7):e0235393, 2020.

\bibitem{Gelman92}
Andrew Gelman and Donald~B. Rubin.
\newblock Inference from {Iterative} {Simulation} {Using} {Multiple}
  {Sequences}.
\newblock {\em Statistical Science}, 7(4):457--472, 1992.
\newblock Publisher: Institute of Mathematical Statistics.
\newblock URL: \url{https://www.jstor.org/stable/2246093}.

\bibitem{geyer92}
Charles~J. Geyer.
\newblock Practical {M}arkov chain {Monte Carlo}.
\newblock {\em Statist. Sci.}, 7(4):473--483, 11 1992.
\newblock \href {https://doi.org/10.1214/ss/1177011137}
  {\path{doi:10.1214/ss/1177011137}}.

\bibitem{Gretton12}
Arthur Gretton, Karsten~M. Borgwardt, Malte~J. Rasch, Bernhard Sch{{\"o}}lkopf,
  and Alexander Smola.
\newblock A kernel two-sample test.
\newblock {\em Journal of Machine Learning Research}, 13(25):723--773, 2012.
\newblock URL: \url{http://jmlr.org/papers/v13/gretton12a.html}.

\bibitem{Gryazina14}
Elena Gryazina and Boris Polyak.
\newblock Random sampling: Billiard walk algorithm.
\newblock {\em European Journal of Operational Research}, 238(2):497 -- 504,
  2014.
\newblock \href {https://doi.org/https://doi.org/10.1016/j.ejor.2014.03.041}
  {\path{doi:https://doi.org/10.1016/j.ejor.2014.03.041}}.

\bibitem{eigenweb}
Ga\"{e}l Guennebaud, Beno\^{i}t Jacob, et~al.
\newblock {\em {Eigen} v3}, 2010.
\newblock URL: \url{http://eigen.tuxfamily.org}.

\bibitem{haraldsdottir2017chrr}
Hulda~S Haraldsd{\'o}ttir, Ben Cousins, Ines Thiele, Ronan~MT Fleming, and
  Santosh Vempala.
\newblock {CHRR}: coordinate hit-and-run with rounding for uniform sampling of
  constraint-based models.
\newblock {\em Bioinformatics}, 33(11):1741--1743, 2017.

\bibitem{heirendt2019creation}
Laurent Heirendt, Sylvain Arreckx, Thomas Pfau, Sebasti{\'a}n~N Mendoza, Anne
  Richelle, Almut Heinken, Hulda~S Haraldsd{\'o}ttir, Jacek Wachowiak, Sarah~M
  Keating, Vanja Vlasov, et~al.
\newblock Creation and analysis of biochemical constraint-based models using
  the cobra toolbox v. 3.0.
\newblock {\em Nature protocols}, 14(3):639--702, 2019.

\bibitem{golub13}
Gene H.Golub and Charles F.~Van Loan.
\newblock {\em Matrix Computations}.
\newblock Johns Hopkins Studies in the Mathematical Sciences. Johns Hopkins
  University Press, 2013.

\bibitem{ideker2001new}
Trey Ideker, Timothy Galitski, and Leroy Hood.
\newblock A new approach to decoding life: systems biology.
\newblock {\em Annual review of genomics and human genetics}, 2(1):343--372,
  2001.

\bibitem{John48}
Fritz John.
\newblock Extremum {Problems} with {Inequalities} as {Subsidiary} {Conditions}.
\newblock In Giorgio Giorgi and Tinne~Hoff Kjeldsen, editors, {\em Traces and
  {Emergence} of {Nonlinear} {Programming}}, pages 197--215. Springer, Basel,
  2014.
\newblock \href {https://doi.org/10.1007/978-3-0348-0439-4_9}
  {\path{doi:10.1007/978-3-0348-0439-4_9}}.

\bibitem{Kalai06}
Adam~Tauman Kalai and Santosh Vempala.
\newblock Simulated annealing for convex optimization.
\newblock {\em Mathematics of Operations Research}, 31(2):253--266, 2006.
\newblock URL: \url{http://www.jstor.org/stable/25151723}.

\bibitem{kaufman1998direction}
David~E Kaufman and Robert~L Smith.
\newblock Direction choice for accelerated convergence in hit-and-run sampling.
\newblock {\em Operations Research}, 46(1):84--95, 1998.

\bibitem{king2016bigg}
Zachary~A King, Justin Lu, Andreas Dr{\"a}ger, Philip Miller, Stephen
  Federowicz, Joshua~A Lerman, Ali Ebrahim, Bernhard~{\O.} Palsson, and
  Nathan~E Lewis.
\newblock Bigg models: A platform for integrating, standardizing and sharing
  genome-scale models.
\newblock {\em Nucleic acids research}, 44(D1):D515--D522, 2016.

\bibitem{klipp2016systems}
Edda Klipp, Wolfram Liebermeister, Christoph Wierling, and Axel Kowald.
\newblock {\em Systems biology: a textbook}.
\newblock John Wiley \& Sons, 2016.

\bibitem{kohl2010systems}
Peter Kohl, Edmund~J Crampin, TA~Quinn, and Denis Noble.
\newblock Systems biology: an approach.
\newblock {\em Clinical Pharmacology \& Therapeutics}, 88(1):25--33, 2010.

\bibitem{Laddha20}
Aditi Laddha and Santosh Vempala.
\newblock {Convergence of Gibbs Sampling: Coordinate Hit-and-Run Mixes Fast},
  2020.
\newblock \href {http://arxiv.org/abs/2009.11338} {\path{arXiv:2009.11338}}.

\bibitem{VempalaRiem18}
Yin~Tat Lee and Santosh~S. Vempala.
\newblock Convergence rate of riemannian hamiltonian monte carlo and faster
  polytope volume computation.
\newblock In {\em Proceedings of the 50th Annual ACM SIGACT Symposium on Theory
  of Computing}, STOC 2018, page 1115–1121, New York, NY, USA, 2018.
  Association for Computing Machinery.
\newblock \href {https://doi.org/10.1145/3188745.3188774}
  {\path{doi:10.1145/3188745.3188774}}.

\bibitem{LovSim}
L\'aszl\'o Lov\'asz, Ravi Kannan, and Mikl\'os Simonovits.
\newblock Random walks and an ${O}^*(n^5)$ volume algorithm for convex bodies.
\newblock {\em Random Structures and Algorithms}, 11:1--50, 1997.

\bibitem{Lovasz06}
L\'aszl\'o Lovász and Santosh Vempala.
\newblock Simulated annealing in convex bodies and an ${O}^*(n^4)$ volume
  algorithms.
\newblock {\em J. Computer \& System Sciences}, 72:392--417, 2006.

\bibitem{lularevic2019improving}
Maximilian Lularevic, Andrew~J Racher, Colin Jaques, and Alexandros
  Kiparissides.
\newblock Improving the accuracy of flux balance analysis through the
  implementation of carbon availability constraints for intracellular
  reactions.
\newblock {\em Biotechnology and bioengineering}, 116(9):2339--2352, 2019.

\bibitem{MacGillivray17}
Michael MacGillivray, Amy Ko, Emily Gruber, Miranda Sawyer, Eivind Almaas, and
  Allen Holder.
\newblock Robust analysis of fluxes in genome-scale metabolic pathways.
\newblock {\em Scientific Reports}, 7, 12 2017.
\newblock \href {https://doi.org/10.1038/s41598-017-00170-3}
  {\path{doi:10.1038/s41598-017-00170-3}}.

\bibitem{machado2018fast}
Daniel Machado, Sergej Andrejev, Melanie Tramontano, and Kiran~Raosaheb Patil.
\newblock Fast automated reconstruction of genome-scale metabolic models for
  microbial species and communities.
\newblock {\em Nucleic acids research}, 46(15):7542--7553, 2018.

\bibitem{boostrandom}
Jens Maurer and Steven Watanabe.
\newblock Boost random number library.
\newblock Software, 2017.
\newblock URL:
  \url{https://www.boost.org/doc/libs/1_73_0/doc/html/boost_random.html}.

\bibitem{Narayanan20}
Hariharan Narayanan and Piyush Srivastava.
\newblock On the mixing time of coordinate hit-and-run, 2020.
\newblock \href {http://arxiv.org/abs/2009.14004} {\path{arXiv:2009.14004}}.

\bibitem{noble2008music}
Denis Noble.
\newblock {\em The music of life: biology beyond genes}.
\newblock Oxford University Press, 2008.

\bibitem{Noronha18}
Alberto Noronha, Jennifer Modamio, Yohan Jarosz, Elisabeth Guerard, Nicolas
  Sompairac, German Preciat, Anna~Dröfn Daníelsdóttir, Max Krecke, Diane
  Merten, Hulda~S Haraldsdóttir, Almut Heinken, Laurent Heirendt, Stefanía
  Magnúsdóttir, Dmitry~A Ravcheev, Swagatika Sahoo, Piotr Gawron, Lucia
  Friscioni, Beatriz Garcia, Mabel Prendergast, Alberto Puente, Mariana
  Rodrigues, Akansha Roy, Mouss Rouquaya, Luca Wiltgen, Alise Žagare,
  Elisabeth John, Maren Krueger, Inna Kuperstein, Andrei Zinovyev, Reinhard
  Schneider, Ronan M~T Fleming, and Ines Thiele.
\newblock {The Virtual Metabolic Human database: integrating human and gut
  microbiome metabolism with nutrition and disease}.
\newblock {\em Nucleic Acids Research}, 47(D1):D614--D624, 10 2018.
\newblock \href {https://doi.org/10.1093/nar/gky992}
  {\path{doi:10.1093/nar/gky992}}.

\bibitem{orth2010flux}
Jeffrey~D Orth, Ines Thiele, and Bernhard~{\O.} Palsson.
\newblock What is flux balance analysis?
\newblock {\em Nature biotechnology}, 28(3):245--248, 2010.

\bibitem{palsson2009metabolic}
Bernhard~{\O.} Palsson.
\newblock Metabolic systems biology.
\newblock {\em FEBS letters}, 583(24):3900--3904, 2009.

\bibitem{palsson2015systems}
Bernhard~{\O.} Palsson.
\newblock {\em Systems biology}.
\newblock Cambridge university press, 2015.

\bibitem{perez2016metabolic}
Octavio Perez-Garcia, Gavin Lear, and Naresh Singhal.
\newblock Metabolic network modeling of microbial interactions in natural and
  engineered environmental systems.
\newblock {\em Frontiers in microbiology}, 7:673, 2016.

\bibitem{quinn2016sample}
Robert~A. Quinn, Jose~A. Navas-Molina, Embriette~R. Hyde, Se~Jin Song, Yoshiki
  V{\'a}zquez-Baeza, Greg Humphrey, James Gaffney, Jeremiah~J. Minich,
  Alexey~V. Melnik, Jakob Herschend, Jeff DeReus, Austin Durant, Rachel~J.
  Dutton, Mahdieh Khosroheidari, Clifford Green, Ricardo da~Silva, Pieter~C.
  Dorrestein, and Rob Knight.
\newblock From sample to multi-omics conclusions in under 48 hours. msystems 1:
  e00038-16.
\newblock {\em Crossref, Medline}, 2016.

\bibitem{Roy20}
Vivekananda Roy.
\newblock {Convergence Diagnostics for Markov Chain Monte Carlo}.
\newblock {\em Annual Review of Statistics and Its Application}, 7(1):387--412,
  2020.
\newblock \href {https://doi.org/10.1146/annurev-statistics-031219-041300}
  {\path{doi:10.1146/annurev-statistics-031219-041300}}.

\bibitem{Saa16}
Pedro~A. Saa and Lars~K. Nielsen.
\newblock ll-{ACHRB}: a scalable algorithm for sampling the feasible solution
  space of metabolic networks.
\newblock {\em Bioinform.}, 32(15):2330--2337, 2016.
\newblock \href {https://doi.org/10.1093/bioinformatics/btw132}
  {\path{doi:10.1093/bioinformatics/btw132}}.

\bibitem{schellenberger2009use}
Jan Schellenberger and Bernhard~{\O.} Palsson.
\newblock Use of randomized sampling for analysis of metabolic networks.
\newblock {\em Journal of biological chemistry}, 284(9):5457--5461, 2009.

\bibitem{schramski2015metabolic}
John~R Schramski, Anthony~I Dell, John~M Grady, Richard~M Sibly, and James~H
  Brown.
\newblock Metabolic theory predicts whole-ecosystem properties.
\newblock {\em Proceedings of the National Academy of Sciences},
  112(8):2617--2622, 2015.

\bibitem{shishvan2018homeostatic}
Siamak~S. Shishvan, Andrea Vigliotti, and Vikram~S. Deshpande.
\newblock The homeostatic ensemble for cells.
\newblock {\em Biomechanics and Modeling in Mechanobiology}, 17(6):1631--1662,
  2018.

\bibitem{smith84}
Robert~L. Smith.
\newblock Efficient {Monte} {Carlo} procedures for generating points uniformly
  distributed over bounded regions.
\newblock {\em Operations Research}, 32(6):1296--1308, 1984.

\bibitem{swainston16}
Neil Swainston, Kieran Smallbone, Hooman Hefzi, Paul~D. Dobson, Judy Brewer,
  Michael Hanscho, Daniel~C. Zielinski, Kok~Siong Ang, Natalie~J. Gardiner,
  Jahir~M. Gutierrez, Sarantos Kyriakopoulos, Meiyappan Lakshmanan, Shangzhong
  Li, Joanne~K. Liu, Veronica~S. Martínez, Camila~A. Orellana, Lake-Ee Quek,
  Alex Thomas, Juergen Zanghellini, Nicole Borth, Dong-Yup Lee, Lars~K.
  Nielsen, Douglas~B. Kell, Nathan~E. Lewis, and Pedro Mendes.
\newblock Recon 2.2: from reconstruction to model of human metabolism.
\newblock {\em Metabolomics}, 12(7):109, June 2016.
\newblock \href {https://doi.org/10.1007/s11306-016-1051-4}
  {\path{doi:10.1007/s11306-016-1051-4}}.

\bibitem{thiele2010protocol}
Ines Thiele and Bernhard~{\O.} Palsson.
\newblock A protocol for generating a high-quality genome-scale metabolic
  reconstruction.
\newblock {\em Nature protocols}, 5(1):93, 2010.

\end{thebibliography}


\appendix

\section{The Billiard walk algorithm}

\begin{algorithm}[H]
  \caption{Billiard Walk$(P, p, \rho, \tau, W)$}
	\label{alg:billiard}
	\SetKwInOut{Input}{Input}
	\SetKwInOut{Output}{Output}
    \SetKwInOut{Require}{Require}
    \SetKwRepeat{Do}{do}{while}

    \Input{polytope $P$; point $p$; upper bound on the number of reflections $\rho$; length of trajectory parameter $\tau$; walk length $W$.}

    \Require { point $p\in P$}

	\Output{A point in $P$}

    \BlankLine
    \For {$j=1,\dots ,W$} {
       $L \leftarrow -\tau\ln\eta$, $\eta\sim \mathcal{U}(0,1)$ \tcp{length of the trajectory}

       $i\leftarrow 0$ \tcp{current number of reflections}

       $p_0\leftarrow p$ \tcp{initial point of the step}

       pick a uniform vector $\varv_0$ from the boundary of the unit ball

    	\Do{$i\leq \rho$} {
    		 $\ell \leftarrow \{p_i + t\varv_i, 0\leq t\leq L\}$ \tcp{segment}
    		\If{$\partial P\cap\ell=\varnothing$} {
    			$p_{i+1} \leftarrow p_i+L\varv_i$ \;
          \textbf{break} \;
    		}
        
    		$p_{i+1} \leftarrow \partial P\cap\ell$ ; \tcp{point update}
    		the inner vector, $s$, of the tangent plane at $p$, s.t.\ $||s|| = 1$\;
    		$L \leftarrow L - |P\cap\ell|$\;
    		$\varv_{i+1} \leftarrow \varv_i - 2(\varv_i^Ts) s$ \tcp{direction update}
    		$i \leftarrow i+1$\;
    	}
    	\leIf{$i=\rho$} {
    		$p \leftarrow p_0$
    	}
    	{
    		$p \leftarrow p_i$
    	}
    }
    \KwRet $p$\;
\end{algorithm}

\section{Additional experiments}\label{sec:apndx}

\begin{table}[h]
\centering
\begin{tabular}{c||c|c|c|c||c|c|c|c||c|c|c|c||c|c|c|c||c|c|c|c||c|c|c|c|}\hline
 \multicolumn{5}{c}{Sampling from iAF1260}\\
\hline
 Do not take into account the sample of the k first phases & Time (sec) & PSRF < 1.1 &  (M) & (N)  \\ \hline
0 first phases & 6955 & $41\%$ & 6 & 56100\\
1 first phases & 6943 & $56\%$ & 6 & 54100  \\
2 first phases & 6890 & $76\%$ & 6 & 55200 \\
3 first phases & 6867 & $95\%$ & 6 & 53200  \\
4 first phases & 6840 & $100\%$ & 6 & 53300 \\
\hline
 \multicolumn{5}{c}{Sampling from iBWG\_1329}\\
\hline

0 first phases & 3067 & $50\%$ & 4 & 42100\\
1 first phases & 3189 & $97\%$ & 5 & 48800 \\
2 first phases & 4652 & $100\%$ & 5 & 56500 \\
\hline
 \multicolumn{5}{c}{Sampling from iEC1344\_C}\\
\hline

0 first phases & 4845 & $77\%$ & 4 & 41100 \\
1 first phases & 4721 & $96\%$ & 4 & 42500 \\
2 first phases & 4682 & $100\%$ & 4 & 39500 \\
\hline
 \multicolumn{5}{c}{Sampling from iJO1366}\\
\hline

0 first phases & 3708 & $66\%$ & 5 & 51500 \\
1 first phases & 6022 & $100\%$ & 5 & 51400 \\
\end{tabular}
\caption{We run our method and we do not take into account the sample of the $k$ first phases, thus we do not also count the value of the Effective Sample Size (ESS) in those phases, before we start storing the generated sample and sum up the ESS of each phase. In all cases MMCS stops when the sum of ESS reaches $1000$. For each case we report the total run-time, the percentage of the marginals that have PSRF smaller than $1.1$ and the total number of phases (M) generates included the $k$ first phases and the total number of Billiard Walk steps (N) included those performed in the $k$ first phases.\label{tab:skip_phases}}
\end{table}

\end{document}